 \definecolor{BLACK}{gray}{0}
 \definecolor{WHITE}{gray}{1}
 \definecolor{RED}{rgb}{1,0,0}
 \definecolor{GREEN}{rgb}{0,1,0}
 \definecolor{BLUE}{rgb}{0,0,1}
 \definecolor{CYAN}{cmyk}{1,0,0,0}
 \definecolor{MAGENTA}{cmyk}{0,1,0,0}
 \definecolor{YELLOW}{cmyk}{0,0,1,0}
\theoremstyle{plain}
\newtheorem{thm}{\protect\theoremname}
\newenvironment{proof}[1][\protect\proofname]{\par
\normalfont\topsep6\p@\@plus6\p@\relax
\trivlist
\itemindent\parindent
\item[\hskip\labelsep
\scshape
#1]\ignorespaces
}{%
\endtrivlist\@endpefalse
}
\providecommand{\proofname}{Proof}
\definecolor{myurlcolor}{rgb}{0,0,0.7}
\providecommand{\theoremname}{Theorem}
\newcommand{\ketbra}[2]{|{#1} \rangle \langle {#2} |}
\begin{document}
%\title{Complementarity of Quantum coherence and mixedness}
\title{Maximally coherent mixed states: Complementarity between maximal coherence and mixedness}

\begin{abstract}
Quantum coherence is a key element in topical research on quantum resource theories and a primary facilitator for design and implementation of quantum technologies. However, the resourcefulness of quantum coherence is severely restricted by environmental noise, which is indicated by the loss of information in a quantum system, measured in terms of its purity. In this work, we derive the limits imposed by the mixedness of a quantum system on the amount of quantum coherence that it can possess. We obtain an analytical trade-off between the two quantities that upperbound the maximum quantum coherence for fixed mixedness in a system. This gives rise to a class of quantum states, ``maximally coherent mixed states,'' whose coherence cannot be increased further under any purity-preserving operation. For the above class of states, quantum coherence and mixedness satisfy a complementarity relation, which is crucial to understand the interplay between a resource and noise in open quantum systems. 

\end{abstract}
 
 \author{Uttam Singh}
 %\email{uttamsingh@hri.res.in}
 
 \author{Manabendra Nath Bera}
 %\email{manabbera@hri.res.in}

 \author{Himadri Shekhar Dhar}
 %\email{hsdhar@hri.res.in}
 
  \author{Arun Kumar Pati}
 %\email{akpati@hri.res.in}
 
 \affiliation{%Quantum Information and Computation Group,\\
 Harish-Chandra Research Institute, Allahabad-211019, India}

%\date{\today}

\maketitle

\section{Introduction}
Recent developments in modern science have shown that quantum coherence plays an important role in low-temperature physics starting from the formulation of the basic laws of thermodynamics to work extraction
\cite{Horodecki2013, Aspuru13, Skrzypczyk2014, Rudolph114, Rudolph214, Varun14, Aberg14, Oppenheim14, Correa2014, Abah2014, Brandao2015}.
Furthermore, it is a useful figure of merit in investigating nanoscale systems \cite{VazquezH2012, Karlstrom11} and understanding efficient energy transfer in complex biological systems 
\cite{Aspuru2009, Lloyd2011, Li2012, Huelga13, Levi14}. 
In recent years, researchers have attempted to develop a framework to formalize the theory of quantum coherence within the realms of quantum information and quantum resource theories \cite{Gour2008, Marvian2013, Fernando2013, Marvian14, Baumgratz2014, Adesso2015, Girolami14, Marvian2014, Fan2014, FBrandao15, Fan15, Pinto2015}. 
Within this context, there are two pertinent theoretical frameworks that attempt to characterize coherence as a resource. The first is based on the resource theory of asymmetry relative to phase shifts, where operations are restricted to phase insensitive operations and symmetric states are free resources \cite{Gour2008, Marvian2013, Marvian2014, Marvian14}. The above theoretical structure has been used in several resource based formulations of quantum thermodynamics \cite{Rudolph114, Rudolph214}. 
%However, when there is no missing phase reference frame, this formalism becomes over restrictive \cite{Marvian14}. 
The second formalism 
%that establishes quantum coherence as a resource 
%with which we will be contended in this work, 
is based on a well-defined set of allowed incoherent operations and a set of freely available incoherent states \cite{Baumgratz2014}. In this framework, quantum coherence is a well-defined resource, which can be quantified in terms of functions or coherence monotones that satisfy certain characteristic conditions. 
Some of the better known measures of quantum coherence are those based on $l_1$ norm and relative entropy \cite{Baumgratz2014}, and skew information \cite{Girolami14}. Incidentally, a recent work proves that all measures of entanglement can be artfully used to define a family of valid measures of quantum coherence \cite{Alex15}. Moreover, the latter formalism has been recently used to address a fundamental issue of wave-particle duality \cite{Manab2015}, thus, enabling coherence to be a valid indicator of the wave 
%favoring the coherence measures of the coherence resource theory \cite{Baumgratz2014} as a genuine quantifiers of the wave 
nature of quantum systems.

Another significant aspect in the dynamics of quantum systems is the role of environmental noise and the unavoidable phenomenon of decoherence. It is known that decoherence is detrimental to the amount of information contained in a quantum state, as measured by its purity. 
To effectively characterize the role of decoherence in erasing information \cite{Landauer1961} one needs to quantify the purity or its complementary property, the mixedness of the state. A faithful measure of mixedness is the normalized linear entropy \cite{Peter04}. From the perspective of resource theory of purity \cite{Oppenheim03, Oppenheim13}, mixedness can be obtained as a complementary quantity to global information. Since, noise tends to increase the mixedness of a quantum system, it emerges as an intuitive parameter to understand decoherence. A natural question that arises is how does important physical quantities in quantum information theory, such as entanglement \cite{HorodeckiRMP09}, fare against mixedness of quantum systems? An interesting direction is to obtain the maximum amount of entanglement for a given mixedness, which leads to the notion of maximally entangled mixed states  \cite{Ishizaka00, Munro01, Verstraete01, Verstraete03, Kwiat04}. The amount of entanglement in such states cannot be increased further under any global unitary operation. Also,  the form of the maximally entangled mixed states depends on the measures employed to quantify entanglement and mixedness in the system \cite{Verstraete03}. Such states have also been investigated in Gaussian quantum systems \cite{Adesso2004, Adesso04, Adesso2007}.

In our work, we investigate the limits imposed by mixedness of a quantum system on the amount of quantum coherence present in the system. 
Since we consider quantum systems where the missing phase-reference frame is apparently lacking, the formalism based on the resource theory of asymmetry \cite{Marvian14} becomes over-restrictive 
\footnote{The discussion on coherence, in Ref.~\cite{Marvian14}, states that the set of phase insensitive operations restricts transformations allowed by incoherent operations. However, restrictive operations are not necessarily physically more relevant or significant under all contexts. For example, in a recent article on the resource theories of thermodynamics based on Gibbs preserving operations and thermal operations \cite{Faist2015}, the authors have argued how Gibbs preserving operations outperform thermal operations even though the former is restricted by the latter set of operations.}.
%
%% 
%\footnote{In Ref.~\cite{Marvian14}, it is stated that the set of incoherent operations is the proper subset of the phase insensitive operations, and later forbid transformations that are allowed by the former. One can contrast this result with the results in \cite{Faist2015} on the resource theories of thermodynamics, based on the Gibbs preserving operations and the thermal operations, where again the later forbid transformations that are allowed by the former and they say in the title that ``Gibbs-preserving maps outperform thermal operations in the quantum regime''. In this sense one can say that ``the incoherent operations outperform phase insensitive operation''. But it will be hard to conclude from here about the set of restrictions that one must choose for resource theory
%of coherence, and will require significant effort to reach the precise conclusions.}. 
Hence, in the present work, we use the theoretical approach based on the set of incoherent operations and states \cite{Baumgratz2014}, to characterize and quantify coherence.
We derive an analytical trade-off between the two quantities that allows us to upperbound the maximum coherence in a given mixed quantum state and vice versa. Using the $l_1$ norm of coherence \cite{Baumgratz2014} as a measure of quantum coherence and normalized linear entropy \cite{Peter04} as a measure of mixedness, we prove that for a general $d$-dimensional quantum system the sum of the (scaled) squared coherence and the mixedness is always less than or equal to unity. This allows us to derive a class of quantum states, viz. ``maximally coherent mixed states'' (MCMSs), that have maximal coherence, up to incoherent unitaries, for a fixed mixedness. These states are parametrized
mixtures of a $d$-dimensional pure maximally coherent state and maximally mixed state. Interestingly, for different values of mixedness the analytical form of MCMS remains unchanged and, unlike maximally entangled mixed states, is not dependent on the choice of the measure of coherence and mixedness, as observed for $l_1$ norm, relative entropy, and geometric measures of coherence.
The obtained analytical results, show an important trade-off between a relevant quantum resource and noise in open quantum systems and a complementary behavior between coherence and mixedness in the class of MCMSs, which may be crucial from the perspective of quantum resource theories and thermodynamics. 
Significantly, since the mixedness of a quantum system can be experimentally measured using quantum interferometric setups \cite{Erik2000, Ekert2002}, without resorting to complicated state tomography, our results provide a mathematical framework to experimentally determine the maximal coherence in a quantum state.
%
%one can experimentally determine the $l_1$ norm of coherence for MCMS and upper bounds for any state. }

The paper is organized as follows. In Sec. \ref{meas}, we briefly discuss the quantification of coherence and mixedness. In Sec. \ref{trade}, we theorize the trade-off between coherence and mixedness in $d$-dimensional systems. In Sec. \ref{comp}, we define a class of maximally coherent mixed states that satisfy a complementarity relation between coherence and mixedness. In Sec. \ref{trans}, we investigate the allowed set of transformations within classes of fixed coherence or mixedness. We conclude with a discussion of the main results in Sec. \ref{dis}.

\section{Quantifying coherence and mixedness}
\label{meas}
%%We briefly introduce the formalism to characterize and quantify coherence and mixedness of quantum systems, which are the two central quantities in our present investigation. 
%%
In this section we present a brief overview of the concepts of quantum coherence and mixedness of quantum systems. To characterize the coherence in a quantum system, we follow the theoretical approach developed in Ref.~\cite{Baumgratz2014}. 
All mathematical formulations and results that are subsequently presented and discussed are valid within the framework of the above theory of quantum coherence.

%%%\subsection{Quantum coherence}
%%\vspace{0.1cm}

%\noindent{\it Quantum coherence}-- 
\subsection{Quantum coherence}
Quantum coherence, an essential feature of quantum mechanics arising from the superposition principle, is inherently a basis dependent quantity. Therefore, any quantitative measure of it must depend on a reference basis. 
% Let the incoherent basis to define the coherence of a quantum system be the computational basis, $\{\ket{i}\}$. 
The framework, to quantify coherence in the context of quantum information theory, is based on the characterization of a set of incoherent states, denoted by $\mathcal{I}$ and incoherent operations $\Lambda^\mathcal{I}$ \cite{Baumgratz2014}. For a given reference basis $\{\ket{i}\}$, all the states of the form $\rho_I = \sum_{i}d_i \ketbra{i}{i}$, where $d_i\geq 0$ and $\sum_{i}d_i=1$,  form a set, $\mathcal{I}$, of incoherent states. Incoherent operations $\Lambda^\mathcal{I}$ are defined as completely positive trace preserving (CPTP) maps, which map the set of incoherent states onto itself, i.e., $\Lambda^\mathcal{I}(\mathcal{I}) \in \mathcal{I}$. Under the set of operations $\Lambda^\mathcal{I}$ and the free incoherent states $\mathcal{I}$, quantum coherence is a valid resource that can be quantified. 
A function, $C(\rho)$, is a bona fide measure of quantum coherence of the state $\rho$ if it satisfies the following conditions \cite{Baumgratz2014} :
(1) $C(\rho) = 0$ iff $\rho \in \mathcal{I}$.
(2) $C(\rho)$ is nonincreasing under the incoherent operations, i.e., $C(\Lambda_I[\rho]) \leq C(\rho)$.
(3) $C(\rho)$ decreases on an average under the selective incoherent operations, i.e., $\sum _k p_k  C(\rho_k) \leq C(\rho)$, where $\rho_k = M_k \rho M^\dagger_k/p_k$, $p_k = \mathrm{Tr}M_k \rho M^\dagger_k$, and $M_k$ are the Kraus elements of an incoherent channel.
(4) $C(\rho)$ is convex in its arguments, i.e., $C(\sum_k p_k \rho_k) \leq \sum_k p_k C(\rho_k)$.
One may note that conditions (3) and (4) together imply condition (2). 

Measures that satisfy the above conditions, include 
% Based on these bona fide conditions one can define various functions that will quantify quantum coherence in a given quantum system. The examples of quantum coherence measures include, 
$l_1$ norm and relative entropy of coherence \cite{Baumgratz2014} and the skew information \cite{Girolami14}. Generic monotones of quantum coherence can also be derived using entanglement monotones that satisfy the above conditions \cite{Alex15}.
In this work, we shall mainly be focused on the $l_1$ norm of coherence.
% and relative entropy of coherence. 
For a quantum state $\rho$ and the reference basis $\{\ket{i}\}$, the $l_1$ norm of coherence is given by
\begin{align}
 C_{l_1}(\rho) = \sum_{i\neq j} |\rho_{ij}|,
\end{align}
where $\rho_{ij}=\langle i | \rho | j\rangle$. Another measure of coherence is the relative entropy of coherence, which is given by 
$
C_{r}(\rho) = S(\rho_d) - S(\rho),
$
where $S(\rho)=-\mathrm{Tr} (\rho~ \mbox{ln}~ \rho)$, is the von Neumann entropy and $\rho_d=\sum_i \langle i | \rho | i\rangle | i\rangle\langle i |$. Moreover, a geometric measure of coherence had also been speculated \cite{Baumgratz2014, Fan15, Alex15} and was, only recently, shown to be a full coherence monotone \cite{Alex15}. The geometric measure is given by $C_g(\rho) = 1-\max_{\sigma\in\mathcal{I}} F(\rho,\sigma)$, where $\mathcal{I}$ is the set of all incoherent states and $F(\rho,\sigma) = \left(\mathrm{Tr}[\sqrt{\sqrt{\sigma}\rho\sqrt{\sigma}}]\right)^2$ is the fidelity of the states $\rho$ and $\sigma$.
It is important to note that quantum coherence, by definition, is not invariant under general unitary operation but does remain unchanged under incoherent unitaries. Furthermore, the maximally coherent pure state is defined by $\ket{\psi_d} = \frac{1}{\sqrt{d}}\sum_{i=0}^{d-1} \ket{i}$, for which $C_{l_1}(\ket{\psi_d}\bra{\psi_d}) = d-1$ and $C_r(\ket{\psi_d}\bra{\psi_d}) = \mbox{ln}~ d$. 

%\vspace{0.1cm}
%\noindent\emph{Mixedness}--
\subsection{Mixedness}
For every quantum state, the ubiquitous interaction with environment or decoherence affects its purity. Noise introduces mixedness in the quantum system leading to loss of information, and 
hence, its characterization is an important task in quantum information protocols. 
The mixedness, which represents nothing but the disorder in the system, can be quantified in terms of entropic functionals, such as linear and von Neumann entropy of the quantum state. For an arbitrary $d$-dimensional state, the mixedness, based on normalized linear entropy \cite{Peter04}, is given as  
\begin{align}
 M_l(\rho) = \frac{d}{d-1}\left( 1 - \mathrm{Tr}\rho^2 \right).
\end{align}
Therefore, for each quantum system, mixedness varies between $0$ and $1$, i.e., $0\leq M_l(\rho)\leq 1$. Furthermore, since $\mathrm{Tr}\rho^2$ describes the purity of quantum system, mixedness expectedly emerges as a complementary quantity to the purity of the given quantum state.
%
%
%
%can be viewed as one minus the purity of a given state. 
The other operational measure of mixedness of a quantum state $\rho$ is the von Neumann entropy, $S(\rho)=-\mathrm{Tr} (\rho~ \mbox{ln}~ \rho)$. Moreover, in a manner similar to quantum coherence, a geometric measure of mixedness can also be defined, which is given by $M_g(\rho) := F(\rho,\mathbb{I}/d) = \frac{1}{d}\left( \mathrm{Tr}\sqrt{\rho} \right)^2$ and lies between $0$ and $1$.

\section{Trade-off between quantum coherence and mixedness}
\label{trade}

%\noindent\emph{Trade-off between quantum coherence and mixedness}.--
In this section, we investigate the restrictions imposed by the mixedness of a system on the maximal amount of quantum coherence. We prove analytically, that there exists a trade-off between the two quantities and for a fixed amount of mixedness the maximal amount of coherence is limited. The results allow us to derive a class of states that are the most resourceful, in terms of quantum coherence, under a fixed amount of noise, characterized by its mixedness.

The important trade-off between quantum coherence, as quantified by the $l_1$ norm, and mixedness, in terms of the normalized linear entropy, is captured by the following theorem.
\begin{thm}
\label{t1}
For any arbitrary quantum system, $\rho$, in $d$ dimensions, the amount of quantum coherence $C_{l_1}(\rho)$ in the state is restricted by the amount of mixedness $M_l(\rho)$ through the 
the inequality
\begin{align}
\label{Gconj}
 \frac{C_{l_1}^2(\rho)}{(d-1)^2} + M_l(\rho) \leq 1.
\end{align}
%where the equality is achieved for the maximally coherent mixed states, given by Eq. (\ref{mxc}).
\end{thm}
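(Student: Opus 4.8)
The plan is to reduce everything to the matrix elements $\rho_{ij}$ in the reference basis $\{\ket{i}\}$ and then apply the Cauchy--Schwarz inequality twice, once to the diagonal entries and once to the off-diagonal entries. The starting point is the identity $\mathrm{Tr}\,\rho^2 = \sum_{i,j}|\rho_{ij}|^2 = \sum_i \rho_{ii}^2 + \sum_{i\neq j}|\rho_{ij}|^2$, which holds since $\rho$ is Hermitian. Substituting this into the definition of the mixedness gives
\begin{align}
 M_l(\rho) = \frac{d}{d-1}\left(1 - \sum_i \rho_{ii}^2 - \sum_{i\neq j}|\rho_{ij}|^2\right),
\end{align}
so the task becomes one of bounding the two sums on the right from below.

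For the diagonal part, I would use that $\sum_i \rho_{ii} = \mathrm{Tr}\,\rho = 1$ with the $\rho_{ii}$ real, so that by Cauchy--Schwarz (equivalently the power-mean inequality) $\sum_i \rho_{ii}^2 \geq \frac{1}{d}\big(\sum_i \rho_{ii}\big)^2 = \frac{1}{d}$. For the off-diagonal part, there are exactly $d(d-1)$ terms in the sum $C_{l_1}(\rho) = \sum_{i\neq j}|\rho_{ij}|$, so Cauchy--Schwarz gives $C_{l_1}^2(\rho) \leq d(d-1)\sum_{i\neq j}|\rho_{ij}|^2$, i.e.\ $\sum_{i\neq j}|\rho_{ij}|^2 \geq \frac{C_{l_1}^2(\rho)}{d(d-1)}$.

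Feeding both estimates back into the expression for $M_l(\rho)$ and simplifying the prefactor yields
\begin{align}
 M_l(\rho) \leq \frac{d}{d-1}\left(1 - \frac{1}{d} - \frac{C_{l_1}^2(\rho)}{d(d-1)}\right) = 1 - \frac{C_{l_1}^2(\rho)}{(d-1)^2},
\end{align}
which is exactly the claimed inequality after rearrangement. The argument is essentially elementary and I do not expect a genuine obstacle; the one point requiring a little care is that the two Cauchy--Schwarz bounds are saturated by different structural features of the state (uniform diagonal versus a rank-one-type off-diagonal pattern), so one should not over-interpret them as simultaneously tight at this stage. Since for the one-sided inequality each bound need only hold separately, this causes no trouble, and the question of when equality is attained — which singles out the maximally coherent mixed states — is best treated as a separate step. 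I would also note in passing that positivity of $\rho$ is not actually used; only Hermiticity and unit trace are needed (positivity enters only in guaranteeing $M_l(\rho)\geq 0$).
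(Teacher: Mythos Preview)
Your proof is correct and considerably more direct than the paper's. The paper parametrizes $\rho$ via the $SU(d)$ generators, writes $C_{l_1}$ and $M_l$ in terms of the generalized Bloch coordinates $x_i$, and then performs a somewhat opaque algebraic rearrangement using the AM--GM inequality $2\sqrt{xy}\le x+y$ on the off-diagonal coordinates, together with the observation that the diagonal-generator contributions enter with a negative sign. Unpacked, these two steps are exactly your two Cauchy--Schwarz bounds: the AM--GM step is equivalent to $\big(\sum_k \sqrt{a_k}\big)^2\le n\sum_k a_k$ applied to the moduli $|\rho_{ij}|$, and dropping the diagonal-generator terms is equivalent to $\sum_i\rho_{ii}^2\ge 1/d$. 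So the underlying analysis is the same, but your route avoids the generator machinery entirely and makes the logical structure transparent. What the paper's parametrization buys is that the same coordinates are reused in Theorem~\ref{t2} for the Lagrange-multiplier optimization; your argument does not set that up, but it is easily redone in matrix-element language. One small correction to your closing remark: the two Cauchy--Schwarz bounds \emph{are} simultaneously saturated, precisely by states with $\rho_{ii}=1/d$ and all $|\rho_{ij}|$ equal --- this is the MCMS class of Eq.~(\ref{mxc}) --- so the equality analysis falls out immediately rather than requiring a separate step.
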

\begin{proof}
Using the parametric form of an arbitrary density matrix, 
%which will be essential for the proof. It is known that 
the state of a $d$-dimensional quantum system can be written in terms
of the generators, $\hat{\Lambda}_i$, of $SU(d)$ \cite{Nielsen10, Eberly81, Mahler98, Kimura03, Khaneja03}, as
\begin{align}
\label{qd}
 \rho = \frac{\mathbb{I}}{d} + \frac{1}{2}\sum_{i=1}^{d^2-1}x_i \hat{\Lambda}_i,
\end{align}
where $x_i = \mathrm{Tr}[\rho \hat{\Lambda}_i]$. 
The condition of positivity can be stated in terms of the coefficients of the characteristic equation for the density matrix $\rho$. Specifically, the Eq. (\ref{qd}) is positive iff all the coefficients of the polynomial det$(\lambda \mathbb{I}-\rho) = \sum_{i=0}^{d}(-1)^iA_i\lambda^{d-i}$ = 0, $A_i\geq 0$ for $1\leq i\leq d$ ($A_0=1$). This criterion can be verified simply by calculating traces of various powers of $\rho$ \cite{Kimura03,Khaneja03}.
The generators $\hat{\Lambda}_i$ $(i=1,2,...,d^2-1)$ satisfy $(1)$  $\hat{\Lambda}_{i} = \hat{\Lambda}_{i}^\dag$,
$(2)$ Tr$(\hat{\Lambda}_{i})=0$, and $(3)$ Tr$(\hat{\Lambda}_{i}\hat{\Lambda}_{j})=2\delta{ij}$.
These generators are defined by the structure constants $f_{ijk}$ (a completely antisymmetric tensor) and
$g_{ijk}$ (a completely symmetric tensor), of Lie algebra $su(d)$ \cite{Mahler98, Kimura03}.
% as:
% \begin{align}
%  &[\hat{\Lambda}_i, \hat{\Lambda}_j] = 2i\sum_{k=1}^{d^2-1}f_{ijk}\hat{\Lambda}_k,\\
%  &\{\hat{\Lambda}_i, \hat{\Lambda}_j\} = \frac{4\delta_{ij}}{d}\mathbb{I} + 2 \sum_{k=1}^{d^2-1} g_{ijk}\hat{\Lambda}_k.
% \end{align}
The generators can be conveniently written as
$\{\hat{\Lambda}_i\}_{i=1}^{d^2-1} = \{\hat{u}_{jk}, \hat{v}_{jk}, \hat{w}_{l}\}$. Here
% \begin{align}
 $\hat{u}_{jk} = (\ketbra{j}{k}+\ketbra{k}{j})$,
% \end{align}
% \begin{align}
 $\hat{v}_{jk} = -i(\ketbra{j}{k}-\ketbra{k}{j})$,
% \end{align}
and 
%\begin{align}
 $\hat{w}_{l} = \sqrt{\frac{2}{l(l+1)}} \sum_{j=1}^{l}\left(\ketbra{j}{j}-l\ketbra{l+1}{l+1}\right)$,
%\end{align}
where $j<k$ with $j,k=1,2,...,d$ and $l=1,2,...,(d-1)$ \cite{Mahler98, Kimura03}.
The generators can be labeled as $\{\hat{\Lambda_1},..,\hat{\Lambda}_{\frac{(d^2-d)}{2}}, \hat{\Lambda}_{\frac{(d^2-d)}{2}+1},..,
\hat{\Lambda}_{(d^2-d)}, \hat{\Lambda}_{(d^2-d)+1},..,\hat{\Lambda}_{(d^2-1)} \} 
= \{ \hat{u}_{12},..,\hat{u}_{(d-1)d}, \hat{v}_{12},..,\hat{v}_{(d-1)d}, \hat{w}_{1},..,\hat{w}_{(d-1)} \}$.

The $l_1$ norm of coherence of a $d$-dimensional system, given by  Eq.~(\ref{qd}), can be written as
\begin{align}
 C_{l_1}(\rho) 
 %&= \sum_{\substack{m,n = 1\\m\neq n}}^{d} |\rho_{mn}|
 %= \frac{1}{2}\sum_{\substack{m,n = 1\\m\neq n}}^{d} |\sum_{i=1}^{d^2-1}x_i\bra{m} \hat{\Lambda}_i %\ket{n}| \nonumber\\
 & = \sum_{i=1}^{(d^2-d)/2}\sqrt{x_i^2 + x_{i+(d^2-d)/2}^2 }.
\label{l1eq}
\end{align}
Furthermore, the mixedness is given by
\begin{align}
 M_l(\rho) = \frac{d}{d-1}(1 - \mathrm{Tr}\rho^2) = 1 - \frac{d}{2(d-1)}\sum_{i=1}^{d^2-1}x_i^2.
\label{mixeq}
\end{align}
Using the expressions for $C_{l_1}(\rho)$ and $M_l(\rho)$, we obtain
\begin{align}
 &\frac{C_{l_1}^2(\rho)}{(d-1)^2} + M_l(\rho)\nonumber\\
 &= \frac{1}{(d-1)^2} \left(\sum_{i=1}^{(d^2-d)/2}\sqrt{x_i^2 + x_{i+(d^2-d)/2}^2 }\right)^2 
 + 1 - \frac{d}{2(d-1)}\sum_{i=1}^{d^2-1}x_i^2\nonumber\\
  &= 1 - \frac{1}{(d-1)^2} \sum_{i=1}^{d^2-1}x_i^2\nonumber\\
  &+ \frac{1}{(d-1)^2} \left(\left(\sum_{i=1}^{(d^2-d)/2}\sqrt{x_i^2 
  + x_{i+(d^2-d)/2}^2 }\right)^2 - (\frac{d^2-d}{2}-1) \sum_{i=1}^{d^2-1}x_i^2\right)\nonumber\\
 %
%  \end{align}
%  %
% \begin{align}
 &= 1 - \frac{1}{(d-1)^2} \sum_{i=1}^{d^2-1}x_i^2 - \frac{((d^2-d)/2-1)}{(d-1)^2} \sum_{i=d^2-d}^{d^2-1}x_i^2\nonumber\\ 
 &~+ \frac{1}{(d-1)^2} \left(\left(\sum_{i=1}^{(d^2-d)/2}\sqrt{x_i^2 
 + x_{i+(d^2-d)/2}^2 }\right)^2 - (\frac{d^2-d}{2}-1) \sum_{i=1}^{d^2-d}x_i^2 \right)\nonumber\\
 %
%  &\leq 1 - \frac{1}{(d-1)^2} \sum_{i=1}^{d^2-1}x_i^2 - \frac{((d^2-d)/2-1)}{(d-1)^2} \sum_{i=d^2-d}^{d^2-1}x_i^2\nonumber\\
%  &~~~~+ \frac{1}{(d-1)^2} \sum_{i=1}^{(d^2-d)/2}(x_i^2  + x_{i+(d^2-d)/2}^2) \nonumber\\
 %
%  &= 1 - \frac{1}{(d-1)^2} \sum_{i=1}^{d^2-1}x_i^2 + \frac{1}{(d-1)^2} \left(\sum_{i=1}^{d^2-1}x_i^2 
%  - (\frac{d^2-d}{2}) \sum_{i=d^2-d}^{d^2-1}x_i^2\right)\nonumber\\
 %
 &\leq 1 - \frac{d}{2(d-1)} \sum_{i=d^2-d}^{d^2-1}x_i^2,
\end{align}
where, in the last step, we have used the inequality $2\sqrt{xy}\leq (x+y)$. Since the $\frac{d}{2(d-1)} \sum_{i=d^2-d}^{d^2-1}x_i^2 \geq 0$,  we have $\frac{C_{l_1}^2(\rho)}{(d-1)^2} + M_l(\rho) \leq 1$, which concludes our proof.
\end{proof}

Theorem \ref{t1} proves that the scaled coherence, $\frac{C_{l_1}(\rho)}{(d-1)}$, of a quantum system with mixedness $M_l(\rho)$, is bounded to a region below the parabola $\frac{C_{l_1}^2(\rho)}{(d-1)^2} + M_l(\rho) = 1$ (see Fig. \ref{fig}).
The quantum states with (scaled) quantum coherence that lie on the parabola are the maximally coherent states corresponding to a fixed mixedness and vice versa. The trade-off obtained between coherence and mixedness can be neatly presented for a qubit system.
Let us consider an arbitrary single-qubit density matrix of the form
%With out loss of any generality, we start with an arbitrary qubit quantum system in the form
\begin{equation}
\label{state}
\rho=\left(\begin{array}{cc}
 a & c\\
 c^* & 1-a
\end{array} \right).
\end{equation}
The eigenvalues of the above density matrix are given by $\lambda_{\pm} = \left(1 \pm \sqrt{1-4[a(1-a)-4|c|^2]}\right)/2$.
The positivity and Hermiticity of the density matrix implies that $0\leq a(1-a)-4|c|^2\leq 1/4$. Now, the mixedness of the 
state $\rho$ is given by
$
M_l(\rho) = 4a(1-a)- 4|c|^2.
$
The $l_1$ norm of coherence is 
% \label{coh}
$C_{l_1}(\rho) = 2|c|$.
%\end{align}
Using the expressions of coherence and mixedness, we obtain
$
C_{l_1}^2(\rho) +  M_l(\rho) = 4a(1-a).
$
%The above relation between $l_1$ norm of coherence and mixedness can be made state independent by using
%the relation between arithmetic and geometric means, namely, $\sqrt{xy} \leq (x+y)/2$ with $x,y \geq 0$. The equality holds if and only if $x=y$.
Since $4a(1-a)\leq 1 $, we have 
$
C_{l_1}^2(\rho) + M_l(\rho) \leq 1,
$
with the equality holding if and only if $a=1/2$. 

From Theorem \ref{t1}, we know that the maximum coherence permissible in an arbitrary quantum state with a fixed mixedness, are the values that lie on the parabola $\frac{C_{l_1}^2(\rho)}{(d-1)^2} + M_l(\rho) = 1$. The same holds for the maximum mixedness allowed in a quantum state with fixed coherence (see Fig. \ref{fig}).
A natural question arises: What are the quantum states that correspond to the maximal coherence and satisfy the equality in Eq. (\ref{Gconj})? The above question is addressed in the following section.

\section{Maximally coherent mixed states and complementarity}
\label{comp}
%\noindent\emph{Maximally coherent mixed states and complementarity}.-- 
Let us find the quantum states with maximal $l_1$ norm of coherence for a fixed amount of mixedness, say $M_f$. For this, we need to maximize the coherence under the constraint that the mixedness $M_f$ as quantified by normalized linear entropy is invariant.
%The analysis can be extended to higher dimensional quantum states. 
Here we provide the form of maximally coherent mixed state for a general $d$-dimensional system.
\begin{thm}
\label{t2}
An arbitrary $d$-dimensional quantum system with maximal coherence for a fixed mixedness $M_f$, up to incoherent unitaries, is of the following form
\begin{align}
\label{mxc}
 \rho_m = \frac{1-p}{d}\mathbb{I}_{d\times d} + p~ \ketbra{\psi_d}{\psi_d},
\end{align}
where $\ket{\psi_d} = \frac{1}{\sqrt{d}}\sum_{i=1}^{d}\ket{i}$, is the maximally coherent state in the computational basis, $\mathbb{I}_{d\times d}$ is the $d$-dimensional identity operator and the mixedness, in terms of normalized linear entropy, is equal to $M_f$ = $1-p^2$.
\end{thm}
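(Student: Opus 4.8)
The plan is to pair the universal upper bound of Theorem~\ref{t1} with an explicit evaluation of the two quantities on $\rho_m$, and then to recover the form of the optimizers by imposing equality in every inequality used in the proof of Theorem~\ref{t1}. Since an incoherent (here, diagonal) unitary leaves $C_{l_1}$ invariant and leaves $M_l$ invariant (the latter being a function of the spectrum only), the statement ``up to incoherent unitaries'' is exactly the right equivalence to aim for.

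First I would verify that $\rho_m$ lies on the boundary parabola. Using $\ketbra{\psi_d}{\psi_d}^2 = \ketbra{\psi_d}{\psi_d}$ and $\mathbb{I}_{d\times d}\ketbra{\psi_d}{\psi_d}=\ketbra{\psi_d}{\psi_d}$, a short computation gives $\mathrm{Tr}\rho_m^2 = \frac{1-p^2}{d} + p^2$, hence $M_l(\rho_m) = \frac{d}{d-1}\left(1-\mathrm{Tr}\rho_m^2\right) = 1-p^2$; setting this equal to $M_f$ fixes $p=\sqrt{1-M_f}$. Since $\mathbb{I}_{d\times d}$ contributes nothing off the diagonal and every off-diagonal entry of $\ketbra{\psi_d}{\psi_d}$ equals $1/d$, we get $C_{l_1}(\rho_m)=d(d-1)\cdot\frac{p}{d}=(d-1)p=(d-1)\sqrt{1-M_f}$. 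This saturates the bound $C_{l_1}(\rho)\leq(d-1)\sqrt{1-M_l(\rho)}$ of Theorem~\ref{t1}, so no state of mixedness $M_f$ can have larger $l_1$ coherence, i.e. $\rho_m$ is maximally coherent at fixed mixedness.

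For the converse I would go back to the two estimates used in the proof of Theorem~\ref{t1} and demand equality. The step bounding $\big(\sum_i\sqrt{x_i^2+x_{i+(d^2-d)/2}^2}\big)^2$ is a Cauchy--Schwarz/AM--GM estimate on the $d(d-1)/2$ nonnegative numbers $a_i=\sqrt{x_i^2+x_{i+(d^2-d)/2}^2}$ and is tight only when all $a_i$ coincide; translating back via $\rho=\mathbb{I}/d+\frac{1}{2}\sum_i x_i\hat{\Lambda}_i$, and using that $a_i=2|\rho_{jk}|$ for the pair $(j,k)$ indexing $i$, this says all off-diagonal moduli $|\rho_{jk}|$ are equal, necessarily to $p/d$ with $p=\sqrt{1-M_f}$. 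The step discarding $\frac{d}{2(d-1)}\sum_i x_i^2\geq0$ over the diagonal generators is tight only when every coefficient of the $\hat{w}_l$ vanishes, i.e. the diagonal of $\rho$ is uniform, $\rho_{jj}=1/d$. Finally I would use a diagonal unitary $U=\mathrm{diag}(e^{i\theta_1},\dots,e^{i\theta_d})$, which is incoherent, to rotate the phases of the off-diagonal entries; choosing the $\theta_j$ so that $U\rho U^\dagger$ has all off-diagonal entries equal to $p/d$ identifies $U\rho U^\dagger$ with $\frac{1-p}{d}\mathbb{I}_{d\times d}+p\,\ketbra{\psi_d}{\psi_d}$, completing the proof.

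The two trace computations on $\rho_m$ are routine; the delicate point — and the step I expect to be the main obstacle — is the last part of the converse, namely showing that the phases of the (equal-modulus) off-diagonal entries can always be removed by a diagonal unitary. The equality analysis alone only pins down a uniform diagonal and equal off-diagonal moduli; to conclude that the phase pattern $\{\arg\rho_{jk}\}$ is a coboundary of the form $\theta_k-\theta_j$ one must bring in the positivity constraint $\rho\geq 0$ (together with the extremal value of $p$), which is where a careful argument has to do real work. An alternative route that I would keep in reserve is to phrase the optimization directly with Lagrange multipliers — extremize $\sum_{j\neq k}|\rho_{jk}|$ subject to $\mathrm{Tr}\rho=1$ and fixed $\mathrm{Tr}\rho^2$ — whose stationarity conditions immediately force a uniform diagonal and equal off-diagonal moduli, after which the same phase-gauging step (again leaning on $\rho\geq 0$) finishes the argument.
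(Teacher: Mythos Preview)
Your primary route --- checking that $\rho_m$ saturates the parabola and then extracting the structural constraints (uniform diagonal, equal off-diagonal moduli) from the equality cases of the inequalities in Theorem~\ref{t1} --- differs from the paper's. The paper uses your ``reserve'' route: it sets up the Lagrange function $\mathcal{L}=C_{l_1}(\rho)+\lambda(1-\tfrac{d}{2(d-1)}\sum_i x_i^2-M_f)$ in the $SU(d)$ coordinates, solves the stationarity conditions, and reads off $x_j=0$ for the diagonal generators and $\sqrt{x_j^2+x_{j+D/2}^2}$ constant across all off-diagonal pairs. Both arguments land at exactly the same intermediate description of the optimizers, namely $\rho_{jj}=1/d$ and $|\rho_{jk}|=p/d$ with $p=\sqrt{1-M_f}$; your equality-case analysis is the cleaner of the two, since it recycles Theorem~\ref{t1} rather than redoing an optimization.

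Where you and the paper diverge is in how the final phase-gauging step is handled. The paper simply asserts that the phases $\phi_{ij}$ of the off-diagonal entries can be removed by a diagonal unitary ``by choosing $\phi_{ij}=\gamma_i-\gamma_j$'', without noting that this system of $\binom{d}{2}$ equations in $d$ unknowns is overdetermined for $d\geq3$. You, to your credit, flag exactly this as ``the main obstacle'' and suggest that positivity $\rho\geq0$ should close the gap. Unfortunately it does not: for $d=3$ and any $0<p\leq 1/2$, the state
\[
\rho=\frac{1}{3}\begin{pmatrix}1&p&p\\p&1&-p\\p&-p&1\end{pmatrix}
\]
has eigenvalues $\tfrac{1}{3}(1+p),\tfrac{1}{3}(1+p),\tfrac{1}{3}(1-2p)\geq0$, mixedness $M_l(\rho)=1-p^2$, and $C_{l_1}(\rho)=2p=(d-1)\sqrt{1-M_l(\rho)}$, so it is a maximizer; yet the product of the three off-diagonal signs is $-1$, and since any incoherent unitary (permutation times diagonal phases) preserves this triple product, $\rho$ is not incoherent-unitarily equivalent to $\rho_m$. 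Thus the uniqueness clause ``up to incoherent unitaries'' is not established by either argument (and in fact fails for sufficiently large $M_f$). What both your proof and the paper's genuinely show is that $\rho_m$ is \emph{a} maximally coherent state at fixed mixedness and that every maximizer has the stated diagonal and off-diagonal moduli; your first paragraph already proves the former cleanly, and that much is solid.
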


%%%%%%NEW PROOF
\begin{proof}
Using the parametric form of the density matrix given in Eq. (\ref{qd}), the expressions for coherence and mixedness of any $d$-dimensional system was obtained in Eqs. (\ref{l1eq}) and (\ref{mixeq}). 
To prove the above theorem, we seek the maximal coherence for a fixed mixedness, say $M_f$, i.e., we maximize the function $ C_{l_1}$, under the constraint
\begin{align}
\label{const}
 M_f = 1 - \frac{d}{2(d-1)}\sum_{i=1}^{d^2-1}x_i^2.
\end{align}
Hence, we need to maximize the Lagrange function
\begin{align}
\mathcal{L} = \sum_{i=1}^{D/2}\sqrt{x_i^2 + x_{i+D/2}^2 } + \lambda \left(1 - \frac{d}{2(d-1)}\sum_{i=1}^{D+d-1}x_i^2 - M_f\right),
\end{align}
where $D=d^2-d$ and $\lambda$ is the Lagrange multiplier.
The stationary points, $\left\{x'_j\right\}$, of $C_{l_1}(\rho)$ imply the vanishing of 
\begin{align}
  \frac{\partial \mathcal{L}}{\partial x_j}\left|_{\left\{x'_j\right\}} \right.= \left\{
  \begin{array}{l l}
    \frac{x'_j}{\sqrt{x'^2_j+ x'^2_{j+D/2}}} -\frac{\lambda d}{d-1}{x'_j}~, & \quad \text{for $j\leq D/2$}\\
    -\frac{\lambda d}{d-1}{x'_j}~, & \quad \text{for $j>D$}
  \end{array} \right..
\end{align}
Therefore, we have ${x'_j} = 0$ for all $j>D$ and $\sqrt{x'^2_j+ x'^2_{j+D/2}} = \frac{d-1}{\lambda d}$ for $j\leq D/2$.
This implies that
\begin{align}
  x'^2_1+x'^2_{1+D/2}  = x'^2_2+x'^2_{2+D/2}  = \cdots  = x'^2_{D/2}+x'^2_{D}  = \left(\frac{d-1}{\lambda d}\right)^2.
\end{align}
Putting these values of ${x'_j}$'s in the constraint equation (\ref{const}) we get
%\begin{align}
% M_f = 1 - \frac{d}{2(d-1)}\frac{d(d-1)}{2}\left(\frac{d-1}{\lambda d}\right)^2 %\nonumber\\
% =1 - \left(\frac{d-1}{2\lambda}\right)^2,
%\end{align}
%which implies 
$ \lambda =(d-1)/[2\sqrt{(1-M_f)}]$. The positive value of $\lambda$ is chosen because negative value
leads to negative coherence, which is not desired. The value of coherence for the stationary states is given by 
\begin{align}
 C_{l_1}(\rho) = \sum_{j=1}^{D/2}\sqrt{x'^2_j + x'^2_{j+D/2} } = (d-1) \sqrt{(1-M_f)}.
\end{align}
This is the maximal value of coherence that a state can have for a fixed value of mixedness $M_f$. Therefore,
the states with $x_j^2+x_{j+D/2}^2 = 4(1-M_f)/ d^2$ for $j\leq D/2$ and $x_j = 0$ for $j>D$
are the states that have maximum coherence for a given mixedness $M_f$. These states can
be written as
\begin{align}
 \rho_m &= \frac{\mathbb{I}}{d} + \frac{R}{2} \sum_{i=1}^{D/2} (\cos\theta_i \hat{\Lambda}_i + \sin\theta_{i} \hat{\Lambda}_{i+D/2}),
\end{align}
where $R = \frac{2\sqrt{(1-M_f)}}{d}$ and $\theta_i=\tan^{-1}(x_{i+D/2}/x_i)$. We observe that the diagonal part of these states is maximally mixed and the points, $\left\{x_i,x_{i+D/2}\right\}_{i = 1}^{D/2}$, that define the off-diagonal elements, lie on the circle of radius $R$ in the real $(x_i,x_{i+D/2})$--plane. 
%Together with the positivity conditions, these states form the set of MCMS. 
An equivalent form of the above states can be written, by identifying $\{\theta_1,...,\theta_{d-1},\theta_{d},...,\theta_{(d^2-d)/2}\} 
= \{ \phi_{12},...,\phi_{1d}, \phi_{23},...,\phi_{(d-1)d} \}$, as
\begin{align}
 \rho_m = \frac{\mathbb{I}}{d} + \frac{R}{2} \sum_{\substack{   {i,j=1} \\   i< j   }}^d (e^{i\phi_{ij}} \ketbra{i}{j} + e^{-i\phi_{ij}} \ketbra{j}{i}).
\end{align}
Now, the phases appearing in the off diagonal components can be removed by applying an incoherent unitary of the form $U = \sum_{n=1}^{d} e^{-i\gamma_{n}}\ketbra{n}{n}$, 
which keeps the coherence invariant. To this end by choosing $\phi_{ij} = \gamma_i - \gamma_j$ we get
\begin{align}
 \rho_m = \frac{\mathbb{I}}{d} + \frac{R}{2} \sum_{\substack{   {i,j=1} \\   i< j   }}^d ( \ketbra{i}{j} + \ketbra{j}{i}).
\end{align}
Now, setting $R = 2p/d$, we obtain the state given in Eq. (\ref{mxc}). Therefore, up to incoherent unitary transformations, 
the states with maximal coherence for a fixed mixedness are those that take the form given by Eq. (\ref{mxc}). This completes the proof.
\end{proof}

For a single-qubit quantum system, the proof can be mathematically elaborated. For the density matrix, given in Eq. (\ref{state}), we need to maximize the coherence under the constraint that $M_f = 4a(1-a) - 4|c|^2$, is invariant. Hence, we need to maximize,
$
 C_{l_1}(\rho) = 2 |c| + \lambda [4a(1-a) - 4|c|^2 -M_f],
$
where $\lambda$ is the Lagrange multiplier. Upon optimization, the stationary points are given by $a=1/2$ and $|c| = 1/(4\lambda)$. Using constraint equation, we get $\lambda = \pm 1/(2\sqrt{1-M_f})$. Choosing the positive value of $\lambda$, we obtain $|c| =\sqrt{1-M_f}/2 $. Thus, the
maximum value of coherence is equal to $C_{l_1}(\rho) = \sqrt{1-M_f}$ and the corresponding states, are given by
\begin{equation}
\label{cmax}
\rho_{m}(\phi)=\frac{1}{2}\left(\begin{array}{cc}
 1 & \sqrt{1-M_f}\exp[i \phi]\\
 \sqrt{1-M_f}\exp[-i \phi] & 1
\end{array} \right),
\end{equation}
where $\phi$ is an arbitrary phase. The phase can be removed through incoherent unitaries which keeps the coherence invariant. The density matrix in Eq. (\ref{cmax}), up to incoherent unitaries, has the form 
$
 \rho_{m}=\frac{1-p}{2}{\mathbb{I}}_{2\times2} + p \ketbra{\psi_2}{\psi_2},
$
where $\ket{\psi_2} = (\ket{0} + \ket{1})/\sqrt{2}$ is the maximally coherent state,  $\mathbb{I}_{2\times2}$ is the identity operator in two dimensions, and $p = \sqrt{1-M_f}$. 
%This is also a pseudo-pure state which represents an admixture of maximally coherent state and a random mixture. 

From Theorem \ref{t2}, the $l_1$ norm of coherence of the maximally coherent mixed state, given in Eq. (\ref{mxc}), is $C_{l_1}(\rho_m) = (d-1)p$, and
the mixedness is equal to $M_l(\rho_m) = \frac{d}{d-1}\left(1-\mathrm{Tr}[\rho_m^2]\right) = 1- p^2$.
Therefore, we obtain a \emph{complementarity} relation between coherence and mixedness, 
\begin{align}
\label{conj}
 \frac{C_{l_1}^2(\rho_m)}{(d-1)^2} + M_l(\rho_m) =1,
\end{align}
which satisfies the equality in Eq.~(\ref{Gconj}), and thus lie on the parabola, $\frac{C_{l_1}^2(\rho_m)}{(d-1)^2} + M_l(\rho_m) =1$, in the coherence-mixedness plane (see Fig.~\ref{fig}). We call the parametrized class of states, defined by Eq. (\ref{mxc}), that satisfy the complementarity between coherence and mixedness, i.e., any change in coherence leads to a complementary change in mixedness, the ``maximally coherent mixed states.'' The MCMS class consists of pseudo-pure states, which are an admixture of the maximally coherent pure state and an incoherent state. 
Incidentally, states of the form given by Eq.~(\ref{mxc}) have also been discussed as states of fixed purity that maximize 
the sum of quantum uncertainties \cite{Korzekwa2014}.

Similarly, one can derive a class of states with maximal mixedness for fixed coherence. Using an approach similar to Theorem \ref{t2}, one can show that the set of maximally mixed coherent states also satisfy the complementarity relation and thus lie on the parabola given by Eq. (\ref{conj}), and hence are of the same form as MCMS class.

\begin{figure}
%\begin{center}
\subfigure[~$d$ = 2]{
\includegraphics[height=42 mm, angle=-90]{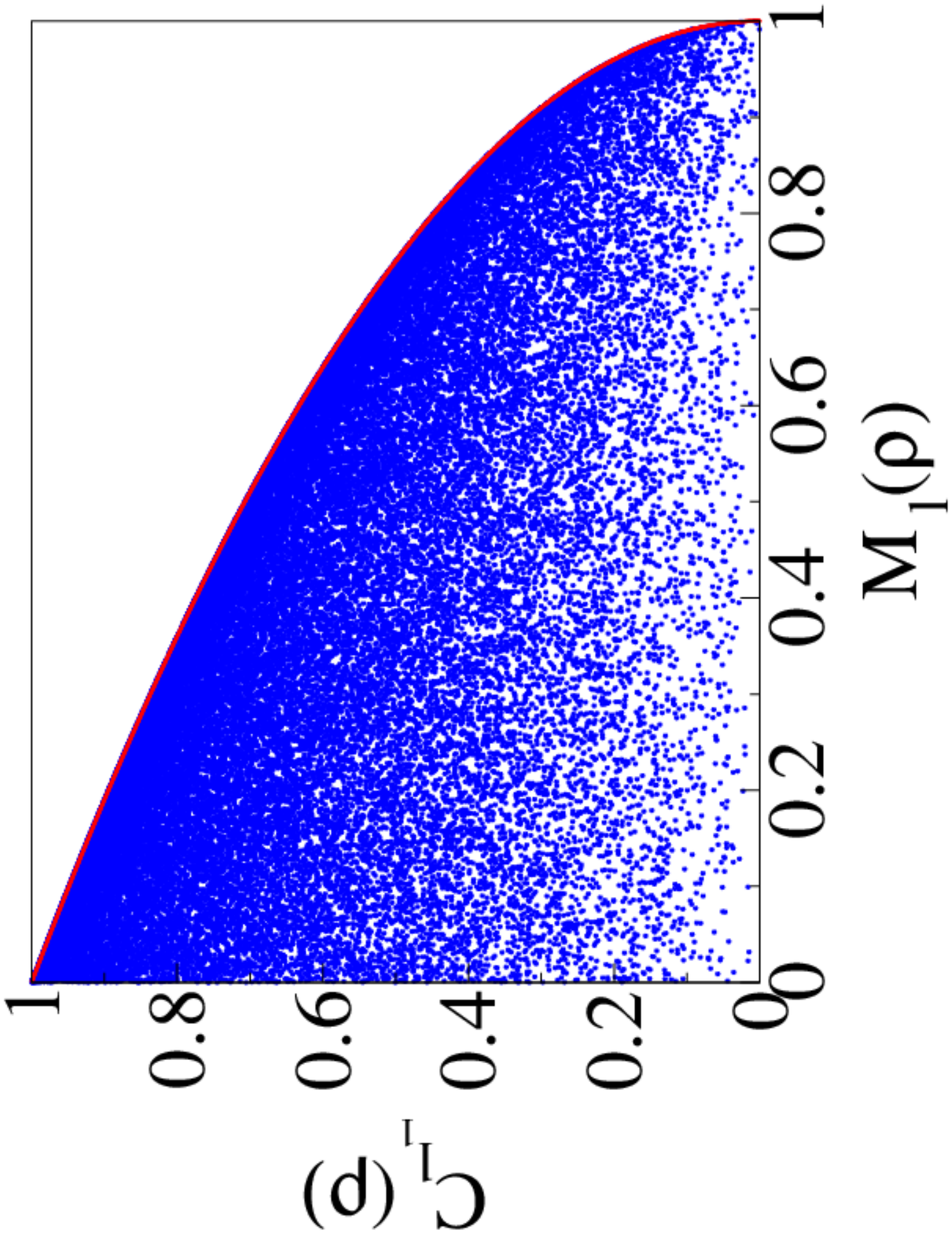}}
\subfigure[~$d$ = 3]{
\includegraphics[height=42 mm, angle=-90]{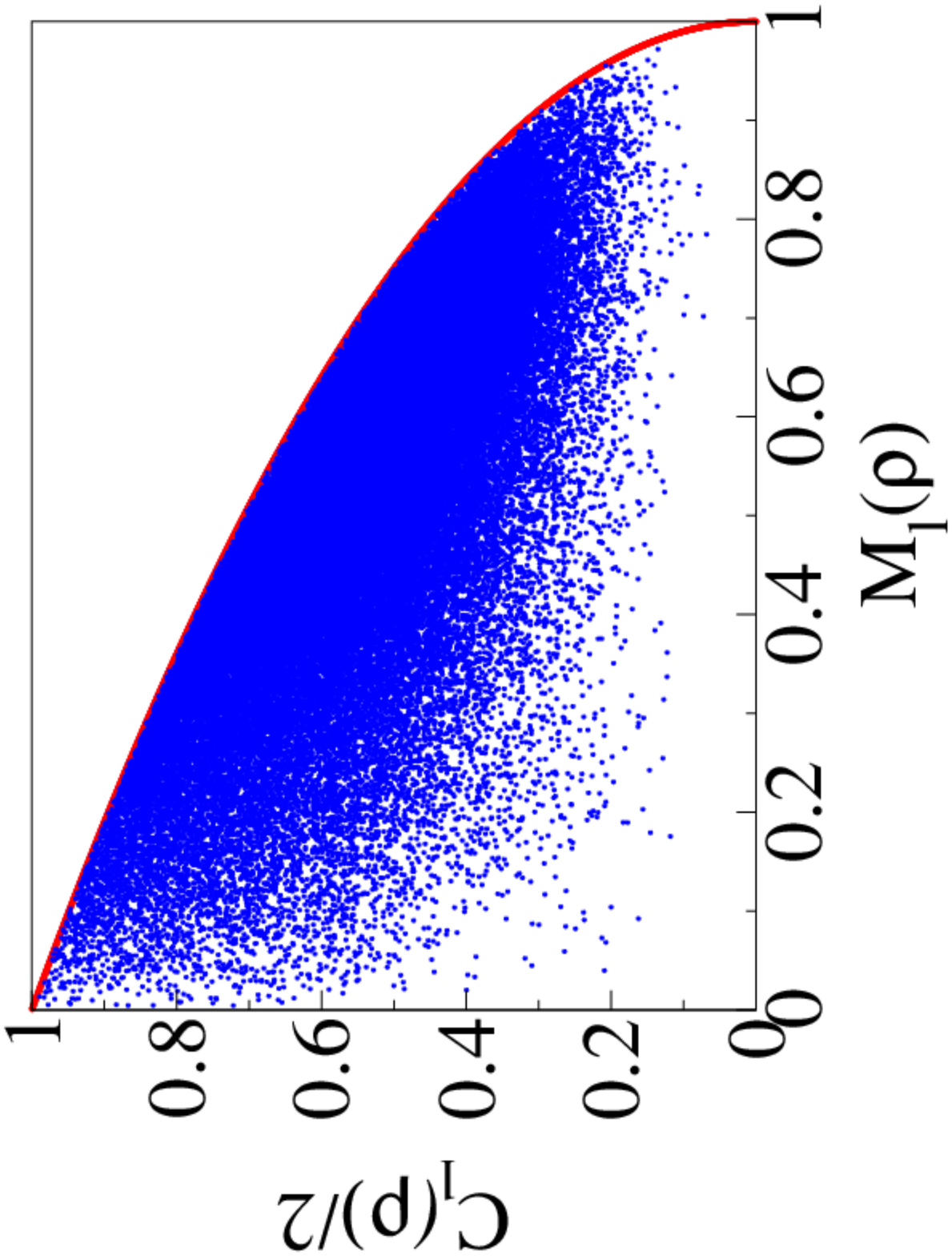}}\\
\subfigure[~$d$ = 4]{
\includegraphics[height=42 mm, angle=-90]{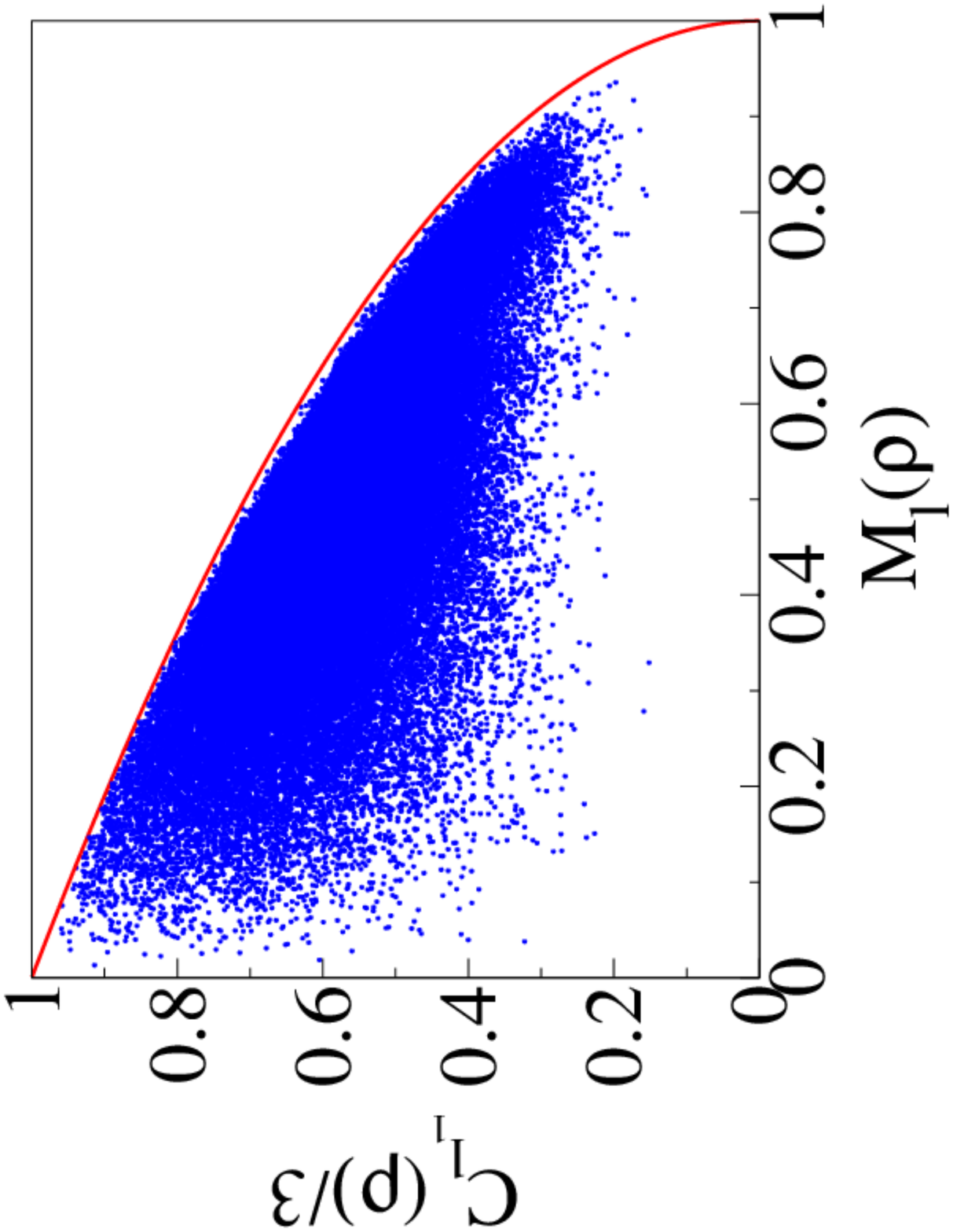}}
\subfigure[~$d$ = 5]{
\includegraphics[height=42 mm, angle=-90]{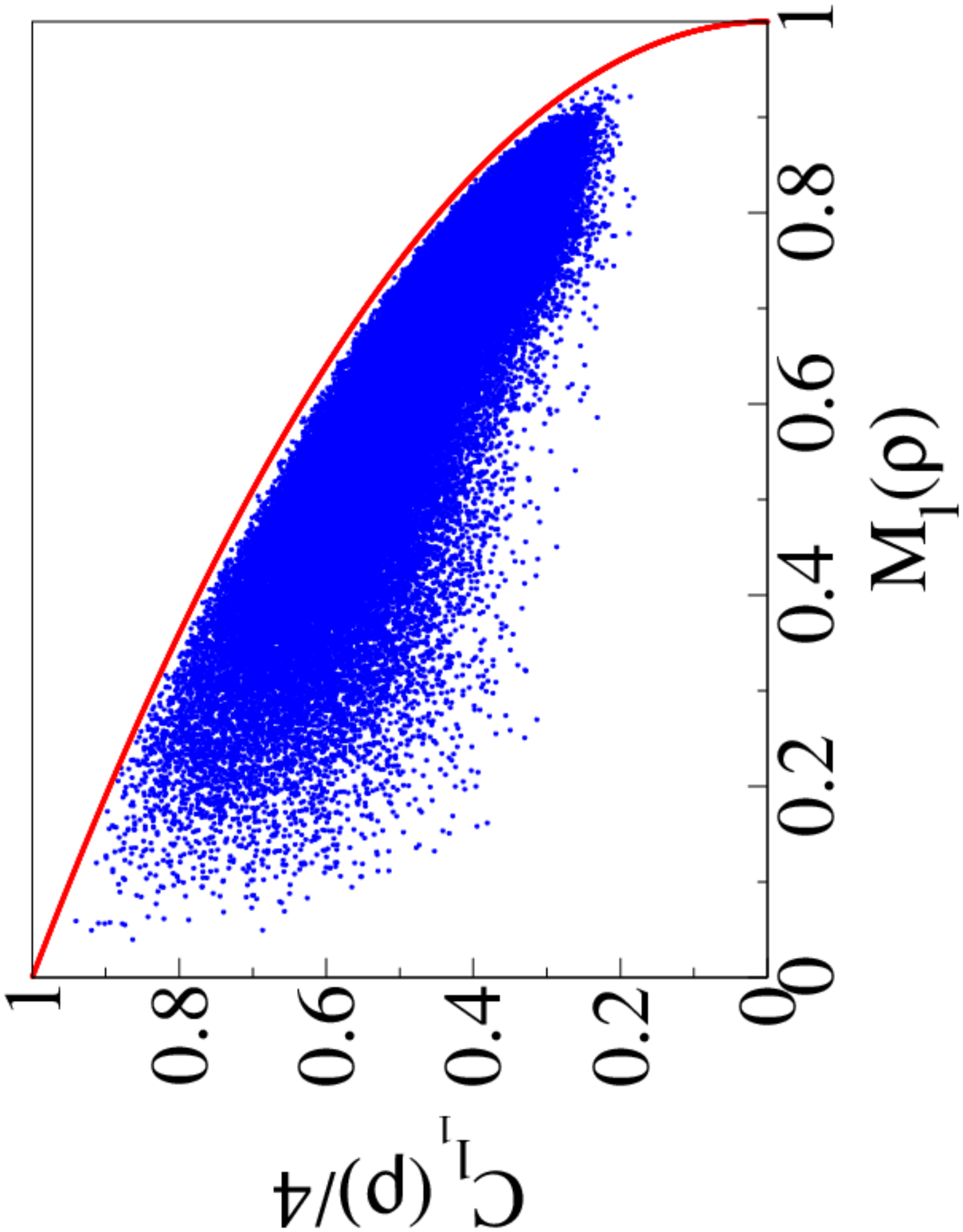}}
%\end{center}
\caption{(Color online) Plot showing the trade-off between the (scaled) coherence, $C_{l_1}(\rho)/(d-1)$, and mixedness $M_l(\rho)$ as obtained from Eq. (\ref{Gconj}). The redline represents the extremal parabola in Eq. (\ref{conj}), which corresponds to the MCMS class that satisfies a complementarity relation between coherence and mixedness. The figure plots the (scaled) coherence, along the $Y$ axis, and mixedness, along the $X$ axis, for $1\times 10^5$ randomly generated states in $d$ = 2, 3, 4, and 5 dimensions, using a specific \emph{Mathematica} package \cite{Miszczak10}. 
%The plot shows that the trade-off and complementarity relations given by Eq. (\ref{Gconj}) and (\ref{conj}), are satisfied by the generated states.
}
\label{fig}
\end{figure}

Interestingly, we note that the form of MCMS remains the same if we employ a different set of measures 
for characterizing coherence and mixedness. For example, let us consider, the relative entropy of coherence $C_r(\rho)$ and von Neumann entropy $S(\rho)$ as our respective measures of coherence and mixedness. 
It can be shown, using the formalism employed in Theorems \ref{t1} and \ref{t2}, that the trade-off relation, $C_r(\rho) + S(\rho) \leq 1$, and the subsequent form of MCMS remains the same.
Similarly, if one considers geometric coherence and geometric mixedness for qubit systems, as the measures of coherence and mixedness, one can obtain an identical trade-off relation between the two quantities. To elaborate, the analytical form of geometric coherence for any arbitrary qubit state [Eq.(\ref{state})] is given by \cite{Alex15}, %which is , is given by
\begin{align}
\label{gcoh}
 C_{g}(\rho) = \frac{1}{2}\left[1-\sqrt{1-4|c|^2}\right],
\end{align}
where $c$ is the offdiagonal element of the qubit density matrix $\rho$ in the computational basis. Further, for an arbitrary qubit state, the geometric mixedness is given by
\begin{align}
\label{gmix}
 M_g(\rho) = \frac{1}{2}\left\{1+\sqrt{4\left[a(1-a)-|c|^2\right]}\right\}.
\end{align}
From Eqs. (\ref{gcoh}) and (\ref{gmix}), we have
\begin{align}
 C_{g}(\rho) + M_g(\rho) &= 1 + \frac{1}{2}\left\{\sqrt{4\left[a(1-a)-|c|^2\right]}- \sqrt{1-4|c|^2}\right\}\nonumber\\
 &\leq 1,
\end{align}
where in the last line we have used the fact that $4a(1-a)\leq 1$. Hence, we observe that the trade-off relation is the same in Theorem \ref{t1}.
%This is the desired trade off relation between the geometric coherence \cite{Alex15} and geometric mixedness as quantified by Eq. (\ref{gmix}), for qubits. 
For arbitrary qubit systems, the form of MCMS, given in Eq.~(\ref{mxc}), remains the same for the geometric coherence and geometric mixedness considered as the measures of coherence and mixedness, respectively, and the complementarity relation, $C_{g}(\rho) + M_g(\rho) = 1$, is satisfied. Hence there is a strong sense of universality about the form of MCMS, within the framework of the considered theory of coherence, in contrast to the measure dependent class of maximally entangled mixed states derived in the context of entanglement theory \cite{Ishizaka00, Munro01, Verstraete01, Verstraete03, Kwiat04}.
However, the form of MCMS for geometric coherence in general qudit systems needs to be further investigated. We note that the question of universality of the class MCMS for all equivalent sets of measures for coherence and mixedness, in any dimension, is still open.

\section{Transformations within classes of state}
\label{trans}

%\noindent\emph{Transformations within classes of state}.--
The trade-off between coherence and mixedness, as established in Theorem \ref{t1} along with the complementarity relation given by Eq. (\ref{conj}) for the MCMS class lead to the question of convertibility within the classes of fixed mixedness or coherence. In other words, given a class of states with fixed mixedness, what are the transformations that allow one to vary the coherence, while keeping the mixedness invariant, or vice versa? The importance of transformation and interconversion between classes of states 
lies in the predominant role it plays in resource theories \cite{Marvian2013, Marvian2014, FBrandao15} and its central status in the formulation of the second law(s) of thermodynamics in quantum regime \cite{Horodecki2013, Fernando2013, Varun14, Aberg14, Oppenheim14, Skrzypczyk2014, Brandao2015}.
In this section, we investigate the set of operations that allow for such transformations for qubit states. Here, we exclusively consider the $l_1$ norm of coherence and normalized linear entropy as the measures of coherence and mixedness, respectively.

%\vspace{0.1cm}
%\noindent\emph{States with fixed coherence}--
\subsection{States with fixed coherence}

For a fixed value of coherence, say $\alpha$, in a fixed reference basis, say the computational basis, the states with varying mixedness, up to incoherent unitaries, are given by
$
 \rho(a) = \left(\begin{array}{cc}
 a & \alpha\\
 \alpha & 1-a
\end{array} \right).
$
Now, let us consider two states, $\rho(a_1)$ and $\rho(a_2)$, that have the same coherence but different mixedness. For the conditions,
$(1-a_1) \geq a_2 \geq a_1$ or $(1-a_1) \leq a_2 \leq a_1$, the inequality,  
$a_1 (1-a_1) \leq a_2 (1-a_2)$ is satisfied. 
%For $(1-a_2)\leq a_1 \leq a_2$, we have $a_1 (1-a_1)\leq a_2 (1-a_2)$. 
For this case, it is easy to see that $\rho(a_2)$ is majorized \cite{Hardy88, Nielsen99, Nielsen01, Marshall2011, Wilde13} by $\rho(a_1)$, i.e., $\rho(a_2)\prec \rho(a_1)$.
%and similarly, in the second case $\rho(a_2,\phi_2)\prec \rho(a_1,\phi_1)$. 
Therefore, using Uhlmann's theorem \cite{Nielsen99, Nielsen01, Marshall2011, Wilde13}, we can write
\begin{align}
 \rho(a_2) = \sum_{i}p_i U_i \rho(a_1)U_i^\dagger,
\end{align}
where $U_i$'s are unitaries and $p_i\geq 0$, $\sum_i p_i=1$. For the qubit case, to keep the coherence invariant, we only allow incoherent unitaries. In the following,
we shall see that the map, 
\begin{align}
 \Phi[\rho] = p\rho + (1-p) \sigma_x\rho \sigma_x,
 \label{map}
\end{align}
where $\sigma_x=\left(\begin{array}{cc} 0 & 1\\ 1 & 0\end{array} \right)$, is sufficient to convert the state from $\rho(a_1)$ to $\rho(a_2)$, keeping the coherence unchanged. 
% We have
% \begin{align}
%  \Phi[\rho(a_2,\phi_2)]= \left(\begin{array}{cc}
%  p a_2+(1-p)(1-a_2) & \frac{\alpha}{2}(e^{i \phi_2}+e^{i (\eta-\phi_2)})\\
%  \frac{\alpha}{2}(e^{-i \phi_2}+e^{-i (\eta-\phi_2)}) & p (1-a_2)+(1-p)a_2\end{array} \right).
% \end{align}
% By setting, $p=(1-a_1-a_2)/(1-2a_2)$, which is a valid probability for the case we are considering, namely, $ a_2\leq a_1 \leq (1-a_2)$) and $\eta = \phi_2 + i (\exp[i\phi_1]-\exp[i\phi_2])$, we get
Specifically, we can achieve $\rho(a_2)$ from $\rho(a_1)$ using Eq. (\ref{map}), by setting $p=(1-a_1-a_2)/(1-2a_1)$,
% , in $\Phi[\rho(a_1)]$
 which is a valid probability for the case we are considering.
% 
% \begin{align}
%  &\Phi[\rho(a_2,\phi_2)]=\rho(a_1,\phi_1).
% \label{map}
% \end{align}
Similarly, in the opposite case with the conditions $(1-a_2) \geq a_1 \geq a_2$ or $(1-a_2) \leq a_1 \leq a_2$, one can find a similar map, as in Eq. (\ref{map}), from $\rho(a_2)$ to $\rho(a_1)$.

Therefore, given two qubit density matrices $\rho$ and $\sigma$ with the same coherence, if $\rho\prec\sigma$ ($\sigma\prec\rho$), then there will always exist a
probability distribution and incoherent unitaries, leading to a transformation  $\sigma \rightarrow \rho$ ($\rho \rightarrow \sigma$).
An interesting observation of the above analysis arises from considering maps related to open quantum systems. For noisy operations, for example the maps in Eq. (\ref{map}), the transformation between states with the same coherence is reminiscent of the phenomenon of freezing of quantum coherence \cite{Adesso2015}.

%\vspace{0.1cm}
%\noindent\emph{States with fixed mixedness}--
\subsection{States with fixed mixedness}

In the same vein, we explore the transformations which convert one state to the other with the same mixedness, but a varying amount of coherence. The states of the form
\begin{align}
 \rho(a) =\left(\begin{array}{cc}
 a & \sqrt{\frac{4a(1-a)-M}{4}}\\
 \sqrt{\frac{4a(1-a)-M}{4}} & 1-a
\end{array} \right),
\end{align}
have the same mixedness $M$ but can have different coherences. Now, let us consider two
different states $\rho(a_1)$ and $\rho(a_2)$.
Since these states have the same mixedness, and hence the same eigenvalues,
they must be related to each other by a unitary similarity transformation.
This similarity transformation can be easily found, once we get the eigenvectors of both the states. Let 
$\rho(a_1)\ket{e^{(1)}_i} = \lambda_i\ket{e^{(1)}_i}$ and $\rho(a_2)\ket{e^{(2)}_i} = \lambda_i\ket{e^{(2)}_i}$ ($i=1,2$).
Now, the unitary similarity transformation $S$, such that $\rho(a_2) = S\rho(a_1) S^\dagger$, can be obtained from the 
definition $S\ket{e^{(1)}_i}=\ket{e^{(2)}_i}$. Thus, for two states of given fixed mixedness, one can always find a reversible similarity
transformation between them. For an example, consider two states,
\begin{align}
 \rho_1 = \left(\begin{array}{cc}
 0.3 & 0.4\\
 0.4 & 0.7
\end{array} \right);~~~~~\rho_2 = \left(\begin{array}{cc}
 0.9 & 0.2\\
 0.2 & 0.1
\end{array} \right),
\end{align}
of the mixedness $M=0.2$. The similarity transformation from $\rho_2$ to $\rho_1$, i.e., $\rho_2=S\rho_1S^\dagger$,
using eigenvectors of both the states, is given by
$
 S = \frac{1}{\sqrt{2}} \left(\begin{array}{cc}
 1 & 1\\
 -1 & 1
\end{array} \right),
$
which is a coherent unitary.
In general, the states with identical mixedness but with varying coherence are connected through coherent unitaries.

\section{Conclusion}
\label{dis}

%\noindent\emph{Conclusion}.--
In our work, we show that there exists an intrinsic trade-off between the resourcefulness and the degree of noise in an arbitrary $d$-dimensional quantum system, as quantified by its coherence and mixedness, respectively. The obtained results are important from the perspective of resource theories as it allows us to quantify the maximal amount of coherence that can be harnessed from quantum states with a predetermined value of mixedness. Thus, we are able to analytically derive a class of maximally coherent mixed states, up to incoherent unitaries, that satisfy a complementarity relation between coherence and mixedness, in any quantum system. 
Due to the experimental ease with which the measurement of purity is feasible \cite{Ekert2002}, our results can be utilized to experimentally determine the maximal $l_1$ norm of coherence for any general $d$-dimensional quantum state. For qubit systems, the above conclusions can also be extended to the relative entropy and geometric measures of coherence.
Importantly, the theoretical formulation and results provided in the paper, are valid within the framework of the resource theory of coherence, as defined in \cite{Baumgratz2014}, and cannot be mathematically extended directly to the quantification of coherence based on the theory of asymmetry \cite{Marvian14}. Developing a framework that can operationally connect the two resource-theoretical perspectives is an important direction for future research.

The results presented in the work provide interesting insights on other aspects of the theory of coherence. An immediate application of our results 
%in the context of a recent paper \cite{Alex15},
is in understanding the connection between the resource theories of coherence and entanglement. It was shown in a recent paper \cite{Alex15}, that the maximum amount of entanglement
that can be created between a system and an incoherent ancilla, via incoherent operations, is equal to the coherence present in the system. 
Using the formalism presented in \cite{Alex15} and the complementarity relations derived in our work, one can prove that 
the maximum entanglement that can be created between a quantum system and an incoherent ancilla, via incoherent operations,
is bounded from above by the mixedness present in the system. 
Another significant aspect of the results is to address the question of order and interconvertibility between classes of quantum states, which is the 
fundamental premise for developing quantum resource theory and thermodynamics. Our analysis shows that, for qubit systems with a fixed coherence, majorization provides a total order on the states based on their degree of mixedness, while for fixed mixedness, all the qubit states with varying degree of coherences
are interconvertible. As a future direction, it will be very interesting to investigate if there exists such a total order in $d$-dimensional states with fixed coherence based on their degree of mixedness. 
We note that the total order on the states is only possible for a specific class of states and provided one works within the framework of the resource theory of coherence considered in our study. It is known that total order between states of fixed coherence is not possible within the resource theory of asymmetry \cite{Rudolph114, Gour2008, Marvian2013}.

To summarize, the present work deals with an important aspect of quantum physics, in particular, it addresses the question of how much a resource can be extracted from any arbitrary quantum system subjected to decoherence. We prove that there is a theoretical limit on the amount of coherence that can be extracted from mixed quantum systems and also derive the class of states that are most resourceful under decoherence. The results presented in the work provide impetus and alternative directions to the study of important physical quantities in open quantum systems and the effect of noise on quantum resources.

\begin{acknowledgments}
US, MNB, and HSD acknowledge the research fellowship of Department of Atomic Energy, Government of India.
\end{acknowledgments}
%\end{center}

 \bibliographystyle{apsrev4-1}
 \bibliography{co-mix-lit}

\end{document}